\documentclass[a4paper,11pt]{article}

\bibliographystyle{plainurl}

\usepackage{a4wide}
\usepackage{fullpage}
\usepackage{amsmath}
\usepackage{amsfonts}
\usepackage{amssymb}
\usepackage{amsthm}
\usepackage{thm-restate}

\usepackage{tabularx,array,booktabs,multirow}
\usepackage{enumitem}

\usepackage{url}

\usepackage{graphicx}
\usepackage{caption}
\usepackage{subcaption}
\usepackage{tikz}
\usetikzlibrary{arrows,decorations.pathmorphing,decorations.pathreplacing,backgrounds,positioning,fit,matrix}
\usetikzlibrary{shapes,calc}

\usepackage{xcolor}

\usepackage[ruled]{algorithm}
\usepackage[noend]{algpseudocode}
\algnewcommand\algorithmicinput{\textbf{Input:}}
\algnewcommand\algorithmicoutput{\textbf{Task:}}
\algnewcommand\Input{\item[\algorithmicinput]}
\algnewcommand\Output{\item[\algorithmicoutput]}

\usepackage{todonotes}
\usepackage{xspace}

\usepackage[pdfdisplaydoctitle,colorlinks]{hyperref}
\usepackage[capitalize, nameinlink]{cleveref}

\newcommand{\NN}{\mathbb{N}} 
 

\newcommand{\sDCParas}{\ell}
\newcommand{\PVCParak}{k}

\newcommand{\PVCParas}{\sDCParas}
\newcommand{\PVCParal}{\PVCParas}


\newcommand\abs[1]{|#1|} 

\newcommand{\problemdef}[4]{
\begin{center}
  \begin{minipage}{0.95\textwidth}
    \normalsize\textsc{#2} \smallskip \\
    \begin{tabularx}{\textwidth}{@{}l@{\hspace{3pt}}X}
      \normalsize\textbf{Input:} & \normalsize#3 \\
      \normalsize\textbf{#1:}    & \normalsize#4
    \end{tabularx}
  \end{minipage}
\end{center}
}

\newtheorem{theorem}{Theorem}
\newtheorem{lemma}{Lemma}

\newtheorem{rrule}{Reduction Rule}
\crefname{obs}{observation}{Observation}
\crefname{rrule}{Reduction Rule}{Reduction Rules}

\begin{document}

\setlength{\abovedisplayskip}{0pt}
\setlength{\abovedisplayshortskip}{0pt}

\title{Kernelization for Partial Vertex Cover via (Additive) Expansion Lemma}

\author{Tomohiro Koana \and André Nichterlein \and Niklas Wünsche}

\date{TU Berlin, Faculty IV, Algorithmics and Computational Complexity, Berlin, Germany\\
\texttt{\{tomohiro.koana,andre.nichterlein\}@tu-berlin.de}}

\maketitle

\begin{abstract}
	Given a graph and two integers~$k$ and~$\ell$, \textsc{Partial Vertex Cover} asks for a set of at most~$k$ vertices whose deletion results in a graph with at most~$\ell$ edges.
	Based on the expansion lemma, we provide a problem kernel with~$(\ell + 2)(k + \ell)$ vertices.
	We then introduce a new, additive version of the expansion lemma and show it can be used to prove a kernel with~$(\ell + 1)(k + \ell)$ vertices for~$\ell \ge 1$.
\end{abstract}

\section{Introduction}

\textsc{Vertex Cover} is arguably the most studied problem in parameterized algorithmics, serving as example for many techniques~\cite{CFKL+15}. 
Given a graph and an integer~$k$, the question is whether there are~$k$ vertices covering all edges, i.\,e., the removal of said~$k$ vertices results in an edgeless graph.
One of the most important techniques from parameterized algorithmics is kernelization.
In a nutshell, the question is whether there is efficient and effective data reduction for a given problem.
The best known kernelization algorithm \textsc{Vertex Cover} takes as input an instance and a constant~$c$ and computes in polynomial time an equivalent instance with at most~$2k - c \log k$ vertices~\cite{Lam11}.

Also studying kernelization, we consider \textsc{Partial Vertex Cover}:
Given a graph and two integers~$k$ and~$\ell$, the question is whether there are at most~$k$ vertices whose deletion results in a graph with at most~$\ell$ edges.
\textsc{Partial Vertex Cover} is well-studied in the parameterized literature:
It is W[1]-hard with respect to~$k$~\cite{GNW07}, but fixed-parameter tractable with respect to~$k+\ell$~\cite{RS08}, $k$ plus the degeneracy of the input graph~\cite{AFS11,PY22,KKNS22}, the number of covered edges~$m-\ell$~\cite{KLR08}, and~$k$ plus the $c$-closure of the graph~\cite{KKNS22}.
\textsc{Partial Vertex Cover} is also studied on special graph classes: Unlike \textsc{Vertex Cover}, it is NP-hard~\cite{CMPS14} but fixed-parameter tractable with respect to~$k$ on bipartite graphs~\cite{AFS11,MPSW20}.
There are subexponential-time algorithms with respect to~$k$ on planar and apex-minor graphs~\cite{FRS11}.
In graphs of constant degeneracy or $c$-closure, \textsc{Partial Vertex Cover} admits kernels of size~$k^{O(1)}$~\cite{KKNS22}.

To the best of our knowledge, the kernelization of \textsc{Partial Vertex Cover} with respect to~$k + \ell$ has not been studied.
We fill this gap by providing polynomial kernels.
More precisely, using the expansion lemma we provide a kernel with~$(\ell + 2)(k + \ell)$ vertices.
In a second step, we present an \emph{additive} version of the expansion lemma and show how it can be used to obtain a smaller kernel with~$(\ell + 1)(k + \ell)$ vertices for~$\ell \ge 1$.

\section{Preliminaries}

\paragraph{Kernelization}
A \emph{parameterized problem} is a set of instances~$(I,k)$ where~$I \in\Sigma^*$ for a finite alphabet~$\Sigma$ and~$k\in \mathbb{N}$ is the \emph{parameter}.
We say that two instances~$(I,k)$ and $(I',k')$ of a parameterized problem~$P$ are \emph{equivalent} if~$(I,k) \in P$ if and only if~$(I',k') \in P$. 
A \emph{kernelization} is an algorithm that, given an instance~$(I,k)$ of a parameterized problem~$P$, computes in polynomial time an equivalent instance~$(I',k')$ of~$P$ (the \emph{kernel}) such that $|I'|+k'\leq f(k)$ for some  computable function~$f$. %
We say that~$f$ measures the \emph{size} of the kernel, and if~$f(k)\in k^{O(1)}$, then we say that $P$~admits a polynomial kernel. 
Typically, a kernel is achieved by applying polynomial-time executable data reduction rules.
We call a data reduction rule~$\mathcal{R}$ \emph{safe} if the new instance~$(I',k')$ that results from applying~$\mathcal{R}$ to~$(I,k)$ is equivalent to~$(I,k)$.

\paragraph{Notation}
We use standard notation from graph theory. 
All graphs considered in this work are simple and undirected.
For a graph~$G$, we denote with~$V(G)$ the vertex set and~$E(G)$ the edge set with the respective sizes~$n := |V(G)|$ and~$m := |E(G)|$.
For~$v \in V(G)$, we denote with~$N(v)$ the set of its neighbors.
Let~$U \subseteq V(G)$.
We denote with~$G[U]$ the subgraph induced by~$U$ and set~$G-U := G[V(G) \setminus U]$.

\problemdef{Question}{Partial Vertex Cover}
{An undirected graph~$G$ and~$k,\ell \in N$}
{Is there a size-$k$ vertex set~$S$ such that~$G-S$ has at most~$\ell$ edges?}

\section{Kernelization with respect to~$k+\ell$}

Chen et al.~\cite{CKJ01} observed that the theorem of Nemhauser and Trotter \cite{NT75} yields a~$2k$-vertex kernel for \textsc{Vertex Cover}.
This is based on the LP relaxation of \textsc{Vertex Cover} (we will refer to it as VCLP for short), which can be formulated as an integer linear program as follows using a variable~$x_v$ for each~$v \in V(G)$:

\begin{align*}
    \min \sum_{v \in V(G)} x_v \qquad\text{subject to}\quad & x_u + x_v \ge 1 \quad \forall uv \in E(G),  \\[-3ex]
                                                       & x_v \in \{ 0, 1 \} \quad \forall v \in V(G).
\end{align*}
In VCLP, the last integral constraint is replaced with~$0 \le x_v \le 1$ for each~$v \in V(G)$.
It is well-known that VCLP admits an optimal solution~$(x_v)_{v \in V(G)}$ such that~$x_v \in \{ 0, 1/2, 1 \}$ for each~$v \in V(G)$.
Moreover, such a solution can be computed in~$O(m \sqrt{n})$ time by computing a maximum matching in an auxiliary bipartite graph (see, for instance, \cite[Section 2.5]{CFKL+15}).
For a half-integral optimal solution~$(x_v)_{v \in V(G)}$, let~$V_0 := \{ v \in V(G) \mid x_v = 0 \}$, $V_1 := \{ v \in V(G) \mid x_v = 1 \}$, and~$V_{1/2} := \{ v \in V(G) \mid x_v = 1 / 2 \}$.
Nemhauser-Trotter theorem \cite{NT75} states that there exists a minimum vertex cover~$S$ such that~$V_0 \cap S = \emptyset$ and~$V_1 \subseteq S$, which leads to the following reduction rule:
Delete~$V_0 \cup V_1$ and decrease~$k$ by~$|V_1|$.
A simple analysis shows that this results in a kernel with at most $2k$~vertices for \textsc{Vertex Cover}.
When it comes to \textsc{Partial Vertex Cover}, however, this reduction rule is no longer safe.
In fact, it is not difficult to construct instances where~$S \cap V_1 = \emptyset$ and~$V_0 \subseteq S$ for every solution~$S$, see \cref{fig:counter-example}.
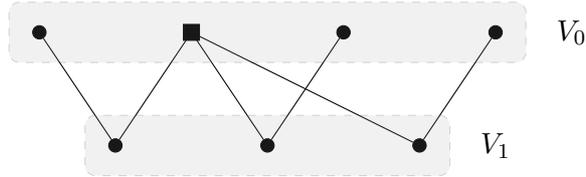
\begin{figure}
	\centering
	\tikzstyle{knoten}=[circle,minimum size=5pt,draw,inner sep=1pt,fill=black]
	\tikzstyle{majarr}=[draw=black]
	\begin{tikzpicture}[auto]
		\foreach \i in {1,2,3} {
			\node[knoten] at (2*\i ,0) (i\i) {};
		}
		\foreach \i in {1,2,3,4} {
			\node[knoten] at (2*\i - 1, 1.5) (h\i) {};
		}
	
		\foreach \i / \j in {1/1, 1/2, 2/2, 2/3, 3/2, 3/4} {
			\draw[majarr] (i\i) edge (h\j);
		}

		\node at (8,1.5) {$V_0$};
		\node at (7,0) {$V_1$};
		\node[minimum size=6pt,draw,inner sep=1pt,fill=black] at (3,1.5) {};
		
		\def\distance{0.4}
		
		\foreach \x / \y / \xx in {2/0/6, 1/1.5/7} {
			\path[fill=black!35,draw=black,dashed,opacity=.15,rounded corners] (\x - \distance, \y - \distance) rectangle ( \xx + \distance, \y + \distance);
		}
	\end{tikzpicture}
	\caption{A graph with seven vertices. The top four vertices are in~$V_0$, the bottom three are in~$V_1$. For~$k=1$ and~$\ell = 3$ there is a unique solution: take the square-marked vertex in~$V_0$ and leave the three edges incident to the three other vertices in~$V_0$ uncovered.}
	\label{fig:counter-example}
\end{figure}
Instead, we use the expansion lemma to identify some vertices we may safely delete from~$V_0 \cup V_1$.

We will assume throughout that every vertex is non-isolated because an isolated vertex can be simply deleted:

\begin{rrule}
  \label{rr:deg0}
  If a vertex~$v \in V(G)$ is isolated, then delete~$v$.
\end{rrule}

Our next reduction rule bounds the size of~$V_1$ and~$V_{1/2}$ in terms of~$k$ and~$\ell$.

\begin{rrule} \label{rrule:LinKernelRRuleV1V12}
If~$\abs{V_1}  + 1/2 \cdot \abs{V_{1/2}} > \PVCParak + \PVCParas$, then return No. 
\end{rrule}

\begin{lemma}\label{lemma:VCPVCSizeBound}
  \Cref{rrule:LinKernelRRuleV1V12} is safe.
\end{lemma}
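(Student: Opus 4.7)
The plan is to use the contrapositive: assume the instance $(G, k, \ell)$ is a Yes-instance and show $|V_1| + (1/2)|V_{1/2}| \le k + \ell$.

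First, I would translate a partial vertex cover into a genuine vertex cover with a small blow-up. If $S$ is a size-$k$ solution, then $G - S$ has at most $\ell$ edges; picking one endpoint of each such residual edge yields a set $T$ with $|T| \le \ell$. Then $S \cup T$ covers every edge of $G$, giving a vertex cover of size at most $k + \ell$.

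Next, I would invoke LP weak duality (or simply that LP relaxation lower bounds the integer optimum): the value of the half-integral optimal solution $(x_v)_{v \in V(G)}$ equals $\sum_v x_v = |V_1| + (1/2)|V_{1/2}|$, and since the characteristic vector of $S \cup T$ is a feasible integral solution of VCLP, we obtain
\[
|V_1| + \tfrac{1}{2} |V_{1/2}| \;\le\; |S \cup T| \;\le\; k + \ell.
\]
Contrapositively, if $|V_1| + (1/2)|V_{1/2}| > k + \ell$, then $(G, k, \ell)$ is a No-instance, so it is safe to return No.

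I do not expect any real obstacle here: the argument is essentially a one-line application of the fact that an LP lower bounds its IP, together with the elementary observation that every partial vertex cover of ``deficit'' $\ell$ extends to a vertex cover by adding at most $\ell$ vertices. The only thing to be slightly careful about is using the \emph{optimal} half-integral LP value (not just any feasible one) so that the inequality goes in the right direction.
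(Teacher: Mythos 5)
Your proposal is correct and matches the paper's own argument: both extend a partial solution to a full vertex cover of size at most $k+\ell$ by picking one endpoint of each residual edge, and both conclude via the fact that the optimal VCLP value $|V_1| + \tfrac{1}{2}|V_{1/2}|$ lower-bounds the size of any vertex cover. The only difference is that you phrase it as a contrapositive while the paper argues by contradiction, which is immaterial.
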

\begin{proof}
Suppose that~$|V_1| + 1/2 \cdot \abs{V_{1/2}} > \PVCParak + \PVCParas$.
Assume for contradiction that there is a set~$S \subseteq V(G)$ such that~$|S| \le k$ and~$|E(G - S)| \le \ell$.
Then, $G$ has a vertex cover of size at most~$k + \ell$:
we choose an arbitrary endpoint of each edge~$e \in E(G - S)$ and add it to~$S$.
The resulting set~$S'$ is a vertex cover of~$G$ of size at most~$\PVCParak + \PVCParal$.
However, since the optimum of VCLP is~$|V_1| + 1/2 \cdot |V_{1/2}| > k + \ell$, every vertex cover should have size greater than~$k + \ell$.
Thus, we have a contradiction.
\end{proof}

In the following, we show two ways to bound the size~$V_0$.
In \Cref{sec:expansion}, we show that~$|V_0|$ can be bounded by~$(\ell + 1) |V_1|$ using the expansion lemma \cite{CFKL+15}.
In \Cref{sec:add-expansion}, we improve the upper bound to~$\ell |V_1|$ using what we call the \emph{additive expansion lemma}. 

\subsection{Kernel using expansion lemma}
\label{sec:expansion}

In this subsection, we bound the size of~$V_0$ using the expansion lemma.
Let~$H$ be a bipartite graph with bipartition~$(A, B)$ and let~$q \in \NN_+$.
We say that a set of edges~$M$ is a~$q$-expansion from~$A$ into~$B$ if every vertex of~$A$ is incident to exactly~$q$ edges of~$M$ and every vertex of~$B$ is incident to at most one edge of~$M$.
It is known that there is a~$q$-expansion from~$A$ into~$B$ if and only if~$|N(X)| \ge q \cdot |X|$ for every~$X \subseteq A$ \cite{CFKL+15}.

%

\begin{lemma}[Expansion lemma {\cite[Lemma 2.18]{CFKL+15}}]\label{lemma:expansion} 
	Let~$q \in \NN_+$ and~$H$ be a bipartite graph with bipartition~$(A,B)$. If
	\begin{itemize}
		\item $\abs{B} \geq q \cdot \abs{A}$, and
		\item there are no isolated vertices in $B$, 
	\end{itemize}
	then there exist non-empty vertex sets $X \subseteq A$ and $Y \subseteq B$ such that
	\begin{itemize}
		\item there is a $q$-expansion of $X$ into $Y$, and
		\item no vertex in $Y$ has a neighbor outside $X$, that is,  $N(Y) \subseteq X$.
	\end{itemize}
	Furthermore, the sets $X$ and $Y$ are computable in polynomial time. 
\end{lemma}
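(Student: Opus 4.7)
The plan is to prove the lemma by induction on $|A|$, using Hall's marriage theorem as the main engine. The key device is the \emph{blown-up} bipartite graph $H'$ obtained from $H$ by replacing each $a \in A$ with $q$ copies, all sharing the neighborhood of $a$ in $B$. A matching in $H'$ that saturates the copy-set $A' := A \times [q]$ is, when projected back to $H$, precisely a $q$-expansion from $A$ into some subset of $B$.

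I would first handle the easy case: if for every $S \subseteq A$ one has $|N_H(S)| \ge q|S|$ (Hall's condition holds in $H'$), then $H'$ admits a matching saturating $A'$. Projecting gives a $q$-expansion from $X := A$ into the set $Y \subseteq B$ of matched partners, and since $X = A$ the required condition $N_H(Y) \subseteq X$ is automatic. If $|A|=0$ the statement is vacuous; if $|A|=1$, the non-isolation hypothesis forces every vertex of $B$ to be adjacent to the unique vertex of $A$, so $|N_H(A)| = |B| \ge q$ and we land in this easy case. This handles the base case.

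Otherwise, there is some $A_0 \subseteq A$ with $|N_H(A_0)| < q|A_0|$. The plan is to recurse on the bipartite subgraph $H'' := H - (A_0 \cup N_H(A_0))$. First, $A_0 \ne A$, because otherwise $|N_H(A)| < q|A| \le |B|$ would force some $b \in B$ to have no neighbor in $A$, contradicting non-isolation. A short verification shows the hypotheses of the lemma still hold in $H''$: the inequality $|B \setminus N_H(A_0)| = |B| - |N_H(A_0)| > q|A| - q|A_0| = q|A \setminus A_0|$ follows from $|B| \ge q|A|$ combined with the choice of $A_0$, and any $b \in B \setminus N_H(A_0)$ has its entire (non-empty) neighborhood contained in $A \setminus A_0$, so it remains non-isolated. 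Applying the induction hypothesis to $H''$ yields non-empty $X \subseteq A \setminus A_0$ and $Y \subseteq B \setminus N_H(A_0)$ together with a $q$-expansion from $X$ into $Y$ and the property $N_{H''}(Y) \subseteq X$. The same pair $(X,Y)$ works in $H$: the containment $Y \subseteq B \setminus N_H(A_0)$ already rules out neighbors in $A_0$, while the inductive guarantee rules out neighbors in $(A \setminus A_0) \setminus X$.

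The main obstacle is bookkeeping rather than new ideas: one must check that the hypotheses transfer to the reduced instance, that the property $N_H(Y) \subseteq X$ is preserved when lifting from $H''$ back to $H$, and that each step runs in polynomial time. The last point follows because either a saturating matching in $H'$ or a violating set $A_0$ can be located in polynomial time via a single maximum-matching / max-flow computation (read off through the min-cut or equivalently K\"onig's theorem), and the recursion has depth at most $|A|$.
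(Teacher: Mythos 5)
The paper does not prove this lemma---it is imported verbatim by citation to \cite[Lemma 2.18]{CFKL+15}---so the only meaningful comparison is with the standard proof in that reference, and your argument is essentially that proof: blow each vertex of $A$ up into $q$ copies, apply Hall's condition, and if it fails peel off a violating set $A_0$ together with $N(A_0)$ and recurse, checking that the two hypotheses and the property $N(Y)\subseteq X$ survive the reduction. Your write-up is correct, including the key points that $A_0\neq A$ (via non-isolation of $B$), that the inequality $\abs{B\setminus N(A_0)} > q\,\abs{A\setminus A_0}$ persists, and that polynomial time follows since $q\le\abs{B}$ makes the blown-up graph polynomially sized.
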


Generalizing the data reduction rule for \textsc{Vertex Cover} based on crowns (a crown is a~$1$-expansion), we get the following data reduction rule.

\begin{rrule}\label{rrule:LinKernelRRuleV0} 
	If there is an $(\ell + 1)$-expansion~$M$ from~$X$ into~$Y$ in~$G[V_1, V_0]$ such that~$N(Y) \subseteq X$, then delete~$X$ and~$Y$ from~$G$ and decrease~$k$ by~$|X|$.
\end{rrule}

\begin{lemma}
  \Cref{rrule:LinKernelRRuleV0} is safe. 
\end{lemma}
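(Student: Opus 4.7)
My plan is to prove the equivalence of the two instances in both directions. In the easy direction, any solution $S'$ of the reduced instance yields $S' \cup X$ of size at most $k$ for the original; since $N(Y) \subseteq X$, the vertices of $Y$ are isolated in $G - (S' \cup X)$, so the uncovered edges are exactly those of $(G - (X \cup Y)) - S'$, of which there are at most $\ell$.

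For the harder direction, given a solution $S$ in $G$ I would \emph{normalize} it to $\hat{S} := (S \setminus Y) \cup X$, which by construction contains $X$ and avoids $Y$, and then return $\hat{S} \setminus X$ as the reduced-instance solution. The key inequality I need is $|X \setminus S| \leq |S \cap Y|$, which forces $|\hat{S}| \leq |S| \leq k$ and hence $|\hat{S} \setminus X| \leq k - |X|$. This inequality is precisely where the $(\ell+1)$-expansion pays off and is the main obstacle in the proof: each $x \in X \setminus S$ is incident to $\ell + 1$ edges of $M$ whose $Y$-endpoints are pairwise distinct (distinct even across different $x$, since each $y \in Y$ meets at most one edge of $M$), yielding $(\ell+1)|X \setminus S|$ edges of $M$ whose $X$-endpoints lie outside $S$. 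Every such edge is uncovered unless its $Y$-endpoint is in $S$, so the budget of $\ell$ uncovered edges forces $|S \cap Y| \geq (\ell+1)|X \setminus S| - \ell$. A quick case split then gives the required bound: if $|X \setminus S| = 0$ the inequality is trivial, and if $|X \setminus S| \geq 1$ one checks $(\ell+1)|X \setminus S| - \ell \geq |X \setminus S|$ directly.

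The remaining task is to control the edge count, which I expect to be routine. I would show $E(G - \hat{S}) \subseteq E(G - S)$: any edge present in the former but not the latter would need an endpoint in $S \setminus \hat{S} = S \cap Y$, but every neighbor of a vertex in $Y$ lies in $N(Y) \subseteq X \subseteq \hat{S}$, contradicting the other endpoint being outside $\hat{S}$. Finally, all vertices of $Y$ are isolated in $G - \hat{S}$ because their neighbors sit in $X \subseteq \hat{S}$, so additionally deleting $Y$ does not change the edge set. Hence $\hat{S} \setminus X$ is a valid solution for $(G - (X \cup Y), k - |X|, \ell)$, which establishes safety of \Cref{rrule:LinKernelRRuleV0}.
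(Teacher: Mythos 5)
Your proof is correct and takes essentially the same route as the paper's: the set you ultimately return, $\hat{S} \setminus X = S \setminus (X \cup Y)$, is exactly the paper's reduced-instance solution, and your key inequality $|X \setminus S| \le |S \cap Y|$ is the same fact the paper establishes (there argued per vertex --- each $x \in X \setminus S$ forces at least one of its $\ell+1$ matched partners into $S$, and these partners are distinct across different $x$ --- rather than via your global count of uncovered expansion edges). The backward direction is identical, so this is a correct proof of the same argument with only presentational differences.
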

\begin{proof}
	Let~$(G', k', \ell)$ denote the resulting instance.
	
	($\Rightarrow$): 
	Let~$S$ be a solution of~$(G, k, \ell)$.
	We claim that~$S' = S \setminus (X \cup Y)$ is a solution of~$(G', k', \ell)$.
	First, observe that~$|E(G' - S')| \le |E(G - S)| \le \ell$ since~$G'-S'$ is a subgraph of~$G-S$.
	We then show that~$|S'| \le k' = k - |X|$.
	Observe that for every~$x \in X \setminus S$, $S$ contains at least one vertex of~$M(x) \cap Y$, where~$M(x)$ is the set of vertices matched to~$x$ in~$M$, since~$|M(x)| = \ell + 1$.
	Thus, we have~$|S'| \le |S| - |X| \le k'$, showing that~$S'$ is a solution of~$(G', k', \ell)$.

	($\Leftarrow$)
	Let~$S'$ be a solution of~$(G', k', \ell)$.
	Because~$N(y) \subseteq X$ for every~$y \in Y$, we have~$|E(G - (S' \cup X))| = |E(G' - S')| \le \ell$.
	Thus, $S' \cup X$ is a solution of~$(G, k, \ell)$ as~$|S' \cup X| \le k$.
\end{proof}

\begin{theorem}
  \textsc{Partial Vertex Cover} admits kernel with at most~$(\ell + 2)(k + \ell)$ vertices.
\end{theorem}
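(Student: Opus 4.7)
The plan is to apply Reduction Rules~\ref{rr:deg0}, \ref{rrule:LinKernelRRuleV1V12}, and~\ref{rrule:LinKernelRRuleV0} exhaustively and to then bound~$|V(G)|$ in the resulting instance. Since Rules~\ref{rr:deg0} and~\ref{rrule:LinKernelRRuleV0} strictly decrease~$|V(G)|$, only polynomially many rounds occur; each round computes a half-integral VCLP optimum in~$O(m\sqrt{n})$ time and invokes the polynomial-time procedure from~\Cref{lemma:expansion}. Assuming none of the rules applies, Rule~\ref{rrule:LinKernelRRuleV1V12} immediately yields~$|V_1| + \tfrac{1}{2}|V_{1/2}| \le k+\ell$, hence~$|V_1| \le k+\ell$ and~$2|V_1| + |V_{1/2}| \le 2(k+\ell)$, so it only remains to bound~$|V_0|$.

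The crux is to show~$|V_0| \le (\ell+1)|V_1|$, and the key enabling observation is~$N_G(V_0) \subseteq V_1$: for every~$v \in V_0$ we have~$x_v = 0$, and the VCLP constraint~$x_u + x_v \ge 1$ forces~$x_u = 1$ for every neighbor~$u$. Consequently, the bipartite graph~$H$ with parts~$V_1$ and~$V_0$ and edge set~$E(G) \cap (V_1 \times V_0)$ already contains every~$G$-edge incident to~$V_0$, so by Rule~\ref{rr:deg0} no vertex of~$V_0$ is isolated in~$H$. If~$|V_0| \ge (\ell+1)|V_1|$, then \Cref{lemma:expansion} applied with~$A = V_1$, $B = V_0$, and~$q = \ell+1$ produces nonempty~$X \subseteq V_1$ and~$Y \subseteq V_0$ together with an~$(\ell+1)$-expansion from~$X$ into~$Y$ and~$N_H(Y) \subseteq X$. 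By the observation, $N_G(Y) = N_H(Y) \subseteq X$, so Rule~\ref{rrule:LinKernelRRuleV0} applies, contradicting exhaustiveness.

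Combining the bounds yields
\[
  n \;\le\; (\ell+2)|V_1| + |V_{1/2}| \;=\; \ell|V_1| + (2|V_1| + |V_{1/2}|) \;\le\; \ell(k+\ell) + 2(k+\ell) \;=\; (\ell+2)(k+\ell).
\]
I expect the main conceptual point to be the enabling observation~$N_G(V_0) \subseteq V_1$: without it, one would have to work with the coarser bipartite graph on~$(V_1 \cup V_{1/2}, V_0)$, where an expansion would not translate into a valid application of Rule~\ref{rrule:LinKernelRRuleV0} (the safeness proof crucially uses that the~$X$-side consists of~$V_1$-vertices, which can be moved into the solution without blowing up the matched edges). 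Once this observation is in place, both the application of the expansion lemma and the final arithmetic are routine.
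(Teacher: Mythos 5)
Your proof is correct and follows essentially the same route as the paper: bound $|V_1|$ and $|V_{1/2}|$ via \Cref{rrule:LinKernelRRuleV1V12}, then force $|V_0| < (\ell+1)|V_1|$ by repeatedly extracting $(\ell+1)$-expansions from $G[V_1,V_0]$ with \Cref{lemma:expansion} and removing them via \Cref{rrule:LinKernelRRuleV0}, and finish with the same arithmetic. The only difference is that you make explicit the observation $N_G(V_0)\subseteq V_1$ (which justifies both the non-isolation hypothesis of the expansion lemma and that $N_H(Y)\subseteq X$ implies $N_G(Y)\subseteq X$), a point the paper leaves implicit.
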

\begin{proof}
	Given an instance~$(G, k, \ell)$ of \textsc{Partial Vertex Cover}, we first compute a half integral solution of VCLP and apply \Cref{rrule:LinKernelRRuleV1V12}.
	As long as~$|V_0| \ge (\ell + 1) |V_1|$, we apply \Cref{lemma:expansion} on~$G[V_1, V_0]$ with~$q = \ell + 1$, get an~$(\ell + 1)$-expansion from~$X$ into~$Y$, and apply \Cref{rrule:LinKernelRRuleV0}.
	
	We then end up with an instance with $|V_0| \le (\ell + 1) |V_1|$.
	Thus, the number of remaining vertices is at most

	\begin{align*}
		|V_1| + |V_{1/2}| + |V_0| < (\ell + 2)|V_1| + |V_{1/2}| \le \ell \cdot |V_1| + 2 (|V_1| + 1/2 \cdot |V_{1/2}|).
	\end{align*}
	Since~$|V_1| \le |V_1| + 1/2 \cdot |V_{1/2}| \le k + \ell$ after applying \Cref{rrule:LinKernelRRuleV1V12}, we have at most~$(\ell + 2) (k + \ell)$ vertices.
\end{proof}

\subsection{Smaller kernel using additive expansion lemma}
\label{sec:add-expansion}

In this subsection, we show that the kernel size can be improved using what we call \emph{additive expansions} instead.

Let~$H$ be a bipartite graph with bipartition~$(A, B)$ and let~$q \in \NN_+$.
We say that~$H$ is a \emph{$q$-additive expansion} from~$A$ into~$B$ if for every~$B' \subseteq B$ of size~$q$, there is a matching saturating~$A$ in~$H[A, B \setminus B']$.
It follows from Hall's theorem that there is a~$q$-additive expansion if and only if~$|N(X)| \ge |X| + q$ for every nonempty~$X \subseteq A$.
We remark that every~$(q + 1)$-expansion is a~$q$-additive expansion.

We show a lemma analogous to the expansion lemma.
Our proof relies on the polynomial-time solvability of submodular minimization. 

\begin{lemma}
	\label{lemma:additive-expansion}
	Let~$H$ be a bipartite graph with a bipartition~$(A, B)$ and~$q \in \NN_+$.
	There is a polynomial-time algorithm that finds a nonempty set~$X \subseteq A$ with~$|N(X)| < |X| + q$ or determines that there is a~$q$-additive expansion from~$A$ into~$B$ (that is, $|N(X)| \ge |X| + q$ for every~$X \subseteq A$ with~$X \ne \emptyset$).
\end{lemma}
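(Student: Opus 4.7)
The plan is to reduce the problem to polynomial-time submodular function minimization. Define $f \colon 2^A \to \mathbb{Z}$ by $f(X) := |N(X)| - |X|$. The first step is to observe that $f$ is submodular. Since $X \mapsto -|X|$ is modular, it is enough to check that $X \mapsto |N(X)|$ is submodular; this follows from the identities $N(X \cup Y) = N(X) \cup N(Y)$ and $N(X \cap Y) \subseteq N(X) \cap N(Y)$, which together give
\[
|N(X)| + |N(Y)| = |N(X \cup Y)| + |N(X) \cap N(Y)| \ge |N(X \cup Y)| + |N(X \cap Y)|.
\]
After this, the whole task reduces to computing $f^{*} := \min_{\emptyset \neq X \subseteq A} f(X)$ and comparing it to $q$.

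The second step addresses the nonemptiness constraint. General submodular function minimization would return $\min_{X \subseteq A} f(X)$, which is always $\le f(\emptyset) = 0$ and therefore possibly useless. To force a nonempty minimizer, I would use the standard fix-an-element trick: for each $a \in A$, define $g_a \colon 2^{A \setminus \{a\}} \to \mathbb{Z}$ by $g_a(Y) := f(Y \cup \{a\})$. Each $g_a$ is still submodular, so polynomial-time submodular minimization yields some $Y_a^{*} \subseteq A \setminus \{a\}$ with $g_a(Y_a^{*}) = \min_{a \in X \subseteq A} f(X)$. Setting $X^{*} := Y_{a^{*}}^{*} \cup \{a^{*}\}$ for $a^{*} \in \arg\min_{a \in A} g_a(Y_a^{*})$ gives a nonempty minimizer of $f$ together with its value $f^{*}$ in polynomial time.

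The third step is the decision itself. If $f^{*} < q$, then $X^{*}$ is a nonempty subset of $A$ with $|N(X^{*})| < |X^{*}| + q$, and we return it as the requested witness. Otherwise, $|N(X)| \ge |X| + q$ for every nonempty $X \subseteq A$, which by the Hall-type characterization of $q$-additive expansions noted immediately before the lemma certifies that $H$ itself is a $q$-additive expansion from $A$ into $B$.

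I do not expect any serious obstacle. The only subtle points are (i) verifying submodularity of $|N(\cdot)|$, which is a short calculation, and (ii) remembering to exclude $\emptyset$ via the per-element contraction, since directly minimizing $f$ would always return $f(\emptyset) = 0$ and give no information.
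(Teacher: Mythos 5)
Your proposal is correct and follows essentially the same route as the paper: establish submodularity of $f(X) = |N(X)| - |X|$, then handle the nonemptiness constraint by minimizing the contracted functions $f_a(X) = f(X \cup \{a\})$ for each $a \in A$ via polynomial-time submodular minimization. The only cosmetic difference is that you verify submodularity through the union/intersection inequality for $|N(\cdot)|$ while the paper checks the equivalent diminishing-returns condition directly.
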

\begin{proof}
	We first show that the function~$f \colon 2^A \to \mathbb{N}$ defined by~$f(X) = |N(X)| - |X|$ for~$X \subseteq A$ is submodular.
	A function~$g\colon 2^A \to \mathbb{N}$ is \emph{submodular} if~$g(X \cup \{ x \}) - g(X) \ge g(X' \cup \{ x \}) - g(X')$ for every~$X \subseteq X' \subseteq A$ and~$a \in A \setminus X'$.
	For the function~$f$ we have
	
	\begin{align*}
		&(f(X \cup \{ x \}) - f(X)) - (f(X' \cup \{ x \}) - f(X')) \\
		&= (|N(X \cup \{ x \})| - |N(X)|) - (|N(X' \cup \{ x \})| - |N(X')|) \\
		&= |N(x) \setminus N(X)| - |N(x) \setminus N(X')|
		= |N(x) \cap N(X')| - |N(x) \cap N(X)| \ge 0.
	\end{align*}
	Thus, $f$ is submodular. 
	
	As the minimum of~$f$ is at~$f(\emptyset) = 0$, minimizing~$f$ is not useful.
	Instead, for each~$a \in A$, we define~$f_a \colon 2^{A} \to \mathbb{N}$ by~$f_a(X) = f(X \cup \{ a \})$ for~$X \subseteq A$.
	As~$f$ is submodular, so is~$f_a$.
	Observe that there exists a nonempty set~$X \subseteq A$ with~$|N(X)| < |X| + q$ if and only if there exists~$a \in A$ with~$\min_{X \subseteq A} f_a(X) < q$.
	Since the minimum of submodular functions can be found in polynomial time~(see e.g.,~\cite{fujishige2005submodular}), the lemma follows.
\end{proof}

A simple, recursive application of \Cref{lemma:additive-expansion} enables us to find a $q$-additive expansion, as shown next.
Note the differences to \cref{lemma:expansion}: the inequality ($|B| \ge q|A|$) becomes strict ($|B| > q|A|$) and as a result one gets a $q$-additive expansion instead of a $q$-expansion.

\begin{lemma}[Additive expansion lemma]\label{lem:add-expansion} 
	Let~$q \in \NN_+$ and~$H$ be a bipartite graph with bipartition~$(A,B)$. If
	\begin{itemize}
		\item $\abs{B} > q \cdot \abs{A}$, and
		\item there are no isolated vertices in~$B$, 
	\end{itemize}
	then there exist non-empty vertex sets~$X \subseteq A$ and~$Y \subseteq B$ such that
	\begin{itemize}
		\item there is a $q$-additive expansion of~$X$ into~$Y$, and
		\item no vertex in~$Y$ has a neighbor outside~$X$, that is,  $N(Y) \subseteq X$.
	\end{itemize}
	Furthermore, the sets~$X$ and~$Y$ are computable in polynomial time. 
\end{lemma}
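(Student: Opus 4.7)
The plan is to prove the statement by induction on $|A|$, using \Cref{lemma:additive-expansion} as the algorithmic subroutine at each step. Given $(H, A, B, q)$ satisfying the hypotheses, I run the algorithm of \Cref{lemma:additive-expansion} on $H$. If it certifies that a $q$-additive expansion from $A$ into $B$ exists, then $X := A$ and $Y := B$ already satisfy the conclusion: the expansion property is given, $N(Y) \subseteq A = X$ holds trivially since $H$ is bipartite, and both sets are nonempty because $|B| > q|A| \ge 0$ forces $B \ne \emptyset$, and the no-isolated-vertex condition then forces $A \ne \emptyset$.

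Otherwise, the algorithm returns a nonempty $X' \subseteq A$ with $|N(X')| \le |X'| + q - 1$. I set $A' := A \setminus X'$, $B' := B \setminus N(X')$ and plan to recurse on $H' := H[A' \cup B']$. Vertices of $B'$ are non-isolated in $H'$: any $b \in B'$ has some $H$-neighbor in $A$, and none of those neighbors lie in $X'$ (else $b \in N(X')$), so the neighbor belongs to $A'$. The one nontrivial step is checking that the strict inequality is preserved in $H'$. Using $|B| \ge q|A| + 1$ and $|N(X')| \le |X'| + q - 1$, one computes
\[
|B'| - q|A'| \;\ge\; (q|A| + 1) - (|X'| + q - 1) - q(|A| - |X'|) \;=\; (q-1)(|X'| - 1) + 1 \;\ge\; 1,
\]
so $|B'| > q|A'|$. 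The hypothesis that $|B| > q|A|$ is strict is used precisely here; the estimate only needs $|X'| \ge 1$, which is automatic. The induction hypothesis then yields nonempty $X \subseteq A'$ and $Y \subseteq B'$ with a $q$-additive expansion of $X$ into $Y$ in $H'$ and $N_{H'}(Y) \subseteq X$.

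To finish, I lift these conclusions back to $H$. Since $Y \subseteq B \setminus N(X')$, no vertex of $Y$ has any $H$-neighbor in $X'$, so $N_H(Y) = N_{H'}(Y) \subseteq X$; and the $q$-additive expansion property only depends on edges of $H$ between $X$ and $Y$, which coincide with those of $H'$. Each recursive call strictly decreases $|A|$ (as $X' \ne \emptyset$) and performs a single polynomial-time invocation of \Cref{lemma:additive-expansion}, giving an overall polynomial running time. The main obstacle I anticipate is exactly the arithmetic displayed above; the pleasant outcome is seeing that it is precisely the strictness of the hypothesis $|B| > q|A|$ which lets the induction carry through, explaining the difference from the classical \Cref{lemma:expansion}.
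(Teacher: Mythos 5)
Your proof is correct and takes essentially the same route as the paper's: both repeatedly invoke \cref{lemma:additive-expansion} to peel off a deficient set $X'$ together with $N(X')$ until the remainder is a $q$-additive expansion, the only difference being that you phrase the iteration as an induction and spell out the arithmetic (via $|N(X')| \le |X'| + q - 1 \le q|X'|$) showing that $|B| > q\cdot|A|$ and the no-isolated-vertices condition are preserved, which the paper asserts more tersely.
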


\begin{proof}
	Given~$H$, we apply \cref{lemma:additive-expansion}.
	If a nonempty set~$X \subseteq A$ such that~$|N(X)| \le |X| + q - 1 \le q |X|$ is found, then we delete~$X$ from~$A$ and~$N(X)$ from~$B$.
	Note that~$X \ne A$ as~$\abs{B} > q \cdot \abs{A}$ holds before deleting~$X$ and~$N(X)$.
	Note also that $\abs{B} > q \cdot \abs{A}$ holds after the deletion of~$X$ and~$N(X)$.
	We repeat this until \cref{lemma:additive-expansion} reports that the remaining, nonempty graph is a $q$-additive expansion.
	Clearly, this process stops after less than~$|A|$ iterations and, by \cref{lemma:additive-expansion}, runs in polynomial time.
\end{proof}

The reduction rule for removing $q$-additive expansions is analogous to \cref{rrule:LinKernelRRuleV0}.
The notable difference is that we require $(\ell+1)$-expansions in \cref{rrule:LinKernelRRuleV0} but $\ell$-additive expansions in \cref{rr:crown}.

\begin{rrule}
	\label{rr:crown}
	If there is an~$\ell$-additive expansion from~$X$ into~$Y$ in~$G[V_1, V_0]$ such that~$N(Y) \subseteq X$, then delete~$X$ and~$Y$ from~$G$ and decrease~$k$ by~$|X|$.
\end{rrule}

\begin{lemma}
  \Cref{rr:crown} is safe.
\end{lemma}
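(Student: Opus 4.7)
The plan is to mirror the safety proof of \cref{rrule:LinKernelRRuleV0}, adjusting only the forward direction to account for the fact that an $\ell$-additive expansion is a weaker object than an $(\ell+1)$-expansion. Throughout, I write $(G', k', \ell)$ for the reduced instance with $G' = G - (X \cup Y)$ and $k' = k - |X|$.

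The backward direction should transfer verbatim: given a solution $S'$ of $(G', k', \ell)$, take $S := S' \cup X$; then $|S| \le k$, and since $N(Y) \subseteq X$, every edge of $G$ incident to $Y$ is covered by $X$, so $E(G - S) = E(G' - S')$ has at most $\ell$ edges.

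The forward direction is where the real work lies. Given a solution $S$ of $(G, k, \ell)$, set $S' := S \setminus (X \cup Y)$; the bound $|E(G' - S')| \le |E(G - S)| \le \ell$ is immediate from subgraph inclusion, so it remains to show $|S \cap (X \cup Y)| \ge |X|$, which then yields $|S'| \le k - |X| = k'$. In the proof for \cref{rrule:LinKernelRRuleV0} this step used that each $x \in X \setminus S$ has $\ell+1$ privately matched neighbours in $Y$, forcing one of them into $S$; this purely local pigeonhole argument fails for additive expansions. The main obstacle is to replace it with a global counting step via the Hall-type characterisation stated before \cref{lemma:additive-expansion}. Writing $X_1 := X \setminus S$ and handling $X_1 = \emptyset$ trivially, the characterisation gives $|N_G(X_1) \cap Y| \ge |X_1| + \ell$. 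Every $y \in (N_G(X_1) \cap Y) \setminus S$ has a neighbour in $X_1 \subseteq X \setminus S$ and therefore contributes a distinct edge of $G - S$; since $|E(G - S)| \le \ell$, at most $\ell$ such vertices can lie outside $S$, so at least $|X_1|$ of the $|X_1| + \ell$ vertices of $N_G(X_1) \cap Y$ lie in $S \cap Y$. Consequently $|S \cap (X \cup Y)| \ge (|X| - |X_1|) + |X_1| = |X|$, completing the proof.
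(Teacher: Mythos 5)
Your proposal is correct and follows essentially the same route as the paper's proof: both directions match, and your global counting step (each uncovered vertex of $N(X\setminus S)\cap Y$ contributes a distinct edge to $E(G-S)$, so at most $\ell$ of the at least $|X\setminus S|+\ell$ such neighbours can avoid $S$) is exactly the paper's argument, just phrased with explicit reference to the Hall-type characterisation.
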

\begin{proof}
  Let~$(G', k', \ell)$ denote the resulting instance.

  ($\Rightarrow$): 
  Let~$S$ be a solution of~$(G, k, \ell)$.
  We claim that~$S' = S \setminus (X \cup Y)$ is a solution of~$(G', k', \ell)$.
  First, observe that~$|E(G' - S')| \le |E(G - S)| \le \ell$ since~$G'-S'$ is a subgraph of~$G-S$.
  We then show that~$|S'| \le k' = k - |X|$.
  Let~$X_S' = X \setminus S$.
  If~$X_S' = \emptyset$, then~$|S'| = |S| - |X| \le k'$.
  So assume that~$X_S' \ne \emptyset$.
  By the assumption, we have~$\abs{N(X_S') \cap Y} \ge |X_S'| + \ell$.
  Since each vertex~$y \in N(X_S') \cap Y$ has at least one neighbor in~$X_S'$, we have~$|S \cap Y| \ge |N(X_S') \cap Y| - \ell \ge |X_S'|$.
  Thus, we have~$|S'| = |S| - |S \cap X| - |S \cap Y| \le |S| - (|X| - |X_S'|) - |X_S'| = |S| - |X| \le k'$, showing that~$S'$ is a solution of~$(G', k', \ell)$.

  ($\Leftarrow$)
  Let~$S'$ be a solution of~$(G', k', \ell)$.
  	Because~$N(y) \subseteq X$ for every~$y \in Y$, we have~$|E(G - (S' \cup X))| = |E(G' - S')| \le \ell$
	Thus,~$S' \cup X$ is a solution of~$(G, k, \ell)$ as~$|S' \cup X| \le k$.
\end{proof}

\begin{theorem}
  \textsc{Partial Vertex Cover} admits a kernel of size~$(\ell' + 1)(k + \ell)$, where~$\ell' = \max \{ \ell, 1 \}$.
\end{theorem}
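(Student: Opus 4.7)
The plan is to split on $\ell$. When $\ell = 0$, \textsc{Partial Vertex Cover} coincides with \textsc{Vertex Cover}, so the classical Nemhauser--Trotter reduction (delete $V_0 \cup V_1$ and decrease $k$ by $|V_1|$) directly yields a kernel with at most $2k = (\ell' + 1)(k+\ell)$ vertices and the theorem follows.

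For $\ell \ge 1$, I would mirror the argument of the previous subsection, but replace \cref{lemma:expansion} with \cref{lem:add-expansion} and take $q = \ell$ instead of $q = \ell + 1$. Concretely: compute a half-integral VCLP optimum, exhaustively apply \cref{rr:deg0} and \cref{rrule:LinKernelRRuleV1V12}, and while $|V_0| > \ell \cdot |V_1|$ invoke \cref{lem:add-expansion} on $G[V_1, V_0]$ with $q = \ell$ to obtain nonempty $X \subseteq V_1$ and $Y \subseteq V_0$ that admit an $\ell$-additive expansion with $N(Y) \subseteq X$, then apply \cref{rr:crown} and re-apply \cref{rr:deg0}. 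The preconditions of \cref{lem:add-expansion} are satisfied: the loop condition gives the strict inequality $|V_0| > \ell \cdot |V_1|$, and LP complementarity forces every neighbor of a $V_0$-vertex to lie in $V_1$, so after \cref{rr:deg0} no vertex of $V_0$ is isolated in $G[V_1, V_0]$. Each round strictly shrinks the graph, so the process terminates in polynomial time at an instance with $|V_0| \le \ell \cdot |V_1|$.

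The final count uses $\ell \ge 1$ to absorb $|V_{1/2}|$: since $(\ell+1)/2 \ge 1$,
\begin{align*}
|V_0| + |V_1| + |V_{1/2}| &\le (\ell+1)|V_1| + |V_{1/2}| \\
&\le (\ell+1)\bigl(|V_1| + \tfrac{1}{2}|V_{1/2}|\bigr) \le (\ell+1)(k+\ell),
\end{align*}
where the last inequality is \cref{rrule:LinKernelRRuleV1V12}.

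The main obstacle I expect is just isolating the boundary case $\ell = 0$, which must be peeled off for two reasons: \cref{lem:add-expansion} is stated only for $q \in \NN_+$, and the step $(\ell+1)/2 \ge 1$ used above to absorb $|V_{1/2}|$ fails. Beyond that, the improvement from $(\ell+2)$ to $(\ell+1)$ is precisely the slack that the additive expansion lemma buys: an $\ell$-additive expansion is already strong enough to make \cref{rr:crown} safe (the safeness argument only needs $|N(X_S') \cap Y| \ge |X_S'| + \ell$ rather than $|M(x)| = \ell + 1$), so the termination condition $|V_0| \le \ell \cdot |V_1|$ replaces the weaker $|V_0| \le (\ell+1)|V_1|$ of the previous theorem.
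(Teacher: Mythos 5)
Your proposal is correct and follows essentially the same route as the paper: compute a half-integral VCLP optimum, bound $|V_1| + 1/2 \cdot |V_{1/2}|$ by $k+\ell$ via \cref{rrule:LinKernelRRuleV1V12}, and repeatedly strip $\ell$-additive expansions from $G[V_1,V_0]$ via \cref{lem:add-expansion} and \cref{rr:crown} until $|V_0| \le \ell \cdot |V_1|$, finishing with the same arithmetic (which in both versions needs $\ell \ge 1$). You are in fact slightly more careful than the paper in explicitly peeling off the $\ell = 0$ case via Nemhauser--Trotter and in justifying the non-isolation precondition through LP complementarity ($N(V_0) \subseteq V_1$), both of which the paper's proof leaves implicit.
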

\begin{proof}
	Given an instance~$(G, k, \ell)$ of \textsc{Partial Vertex Cover}, we first compute the VCLP and exhaustively apply \Cref{rrule:LinKernelRRuleV1V12,rr:crown} exhaustively.

	Denote with~$V_0'$ and~$V_1'$ the vertices remaining in~$V_0$ and~$V_1$.
	Note that~$|V_0'| \le \ell \cdot |V_1'|$ as otherwise~$G[V_0',V_1']$ would contain an~$\ell$-additive expansion (see \cref{lem:add-expansion}) which can be removed by \cref{rr:crown}.
	Thus, the resulting instance has at most

	\begin{align*}
		|V_1'| + |V_0'| + |V_{1/2}|
		&\le (\ell + 1) |V_1'| + |V_{1/2}| \\
		&\le (\ell + 1) |V_1| + |V_{1/2}|
		\le (\ell - 1) |V_1| + 2 (|V_1| + 1/2 \cdot |V_{1/2}|)
	\end{align*}
	vertices.
	Since~$|V_1| \le |V_1| + 1/2 \cdot |V_{1/2}| \le k + \ell$, we have at most~$(\ell + 1) (k + \ell)$ vertices for~$\ell \ge 1$.
\end{proof}

For instance, the number of vertices in the kernel amounts to~$2k + 2$ for~$\ell = 1$.

\section{Conclusion}

We provided a kernel with~$(\ell + 1)(k + \ell)$ vertices for~$\ell \ge 1$ for \textsc{Partial Vertex Cover}.
The first question is whether this can be improved?
For example, does \textsc{Partial Vertex Cover} also admit a kernel with~$2k + \ell^{O(1)}$ vertices?
Besides improving the concrete kernel, finding further applications for the additive expansions seems a promising line of research. 
More precisely, are there more cases where current kernels based on the expansion lemma can be improved with additive expansions?

\section*{Acknowledgement}
Tomohiro Koana is supported by the Deutsche Forschungsgemeinschaft (DFG) project DiPa (NI 369/21).

\bibliography{ref}

\end{document}